\newtheorem{theorem}{\bf Theorem}
\newtheorem{definition}{\bf Definition}
\newcommand{\Rmnum}[1]{\expandafter\@slowromancap\romannumeral #1@}
\begin{document}

\title{Energy Management for a User Interactive Smart Community: A Stackelberg Game Approach}
\author{\IEEEauthorblockN{Wayes Tushar\IEEEauthorrefmark{1},
Bo Chai\IEEEauthorrefmark{2}, Chau Yuen\IEEEauthorrefmark{1}, David B. Smith\IEEEauthorrefmark{3} and  H. Vincent Poor\IEEEauthorrefmark{4}}
\IEEEauthorblockA{\IEEEauthorrefmark{1}Singapore University of Technology and Design, Singapore 138682. Email: \{wayes\_tushar, yuenchau\}@sutd.edu.sg. \\\IEEEauthorrefmark{2}State Key Lab. of Industrial Control Technology, Zhejiang University, China. Email: chaibozju@gmail.com.\\
\IEEEauthorrefmark{3}NICTA, Canberra, ACT, Australia. Email: david.smith@nicta.com.au.\\
\IEEEauthorrefmark{4}School of Engineering and Applied Science, Princeton University, Princeton, NJ, USA. Email: poor@princeton.edu.}
\thanks{\IEEEauthorrefmark{1}This work is supported by the Singapore University of Technology and Design (SUTD) under Energy Innovation Research Program (EIRP) Singapore NRF2012EWT-EIRP002-045.}
\thanks{\IEEEauthorrefmark{3}David Smith is also with the Australian National University (ANU), and his work is supported by NICTA. NICTA is funded by the Australian Government through the Department of Communications and the Australian Research Council through the ICT Centre of Excellence Program.}
}
\maketitle
\begin{abstract}
This paper studies a three party energy management problem in a user interactive smart community that consists of a large number of residential units (RUs) with distributed energy resources (DERs), a shared facility controller (SFC) and the main grid. A Stackelberg game is formulated to benefit both the SFC and RUs, in terms of incurred cost and achieved utility respectively, from their energy trading with each other and the grid. The properties of the game are studied and it is shown that there exists a unique Stackelberg equilibrium (SE). A novel algorithm is proposed that can be implemented in a distributed fashion by both RUs and the SFC to reach the SE.  The convergence of the algorithm is also proven, and shown to always reach the SE. Numerical examples are used to assess the properties and effectiveness of the proposed scheme.
\end{abstract}
\begin{IEEEkeywords}
Smart grid, distributed energy resources, game theory, energy management.
\end{IEEEkeywords}
\IEEEpeerreviewmaketitle
\section{Introduction}\label{introduction}
Distributed energy resources (DERs) have the capability of assisting consumers is reducing their dependence on the main grid as their primary source of electricity, and thus, lowering their costs of energy purchase~\cite{Tham-JTSMCS:2013}. They are also critical to the reduction of green house emissions and alleviation of climate change~\cite{Georgilakis-JTPS:2013}. As a result, there has been an increasing interest in deploying DERs in the smart grid. The majority of recent works in managing energy using DERs have mainly focussed on two areas: 1) the study of feasibility and control of DERs for their use in designing efficient micro-grids, e.g., see \cite{Justo-J-RSER:2013} and the references therein; and 2) scheduling energy consumption of household equipment by exploiting the use of DERs to optimize different grid operational objectives such as minimizing the energy consumption costs of users~\cite{Zhang-J_ECM:2013, Chai-ASCC:2013}. In most cases it is assumed that the users with DERs also possess storage devices. However, there are also some cases in which users might not want to store energy. Rather, they are more inclined to consume or trade energy as soon as it is generated, e.g., as in a grid-tie solar system without battery back up~\cite{wt_battery_solar:2013}. Furthermore, the majority of research on energy management emphasizes energy trading between two energy entities, i.e., two-way energy flow. For example, a considerable number of references that use such models can be found in \cite{Fang-J-CST:2012, Hassan-Energies:2013,Yu-IEEENetworks:2011, Liu-ISGT:2013, Hassan-ISGT:2013}.

In this paper, a three party energy management scheme is proposed for a smart community that consists of multiple residential units (RUs), a shared facility controller (SFC) and the main grid. To the best of our knowledge, this paper is the first that introduces the idea of a shared facility and considers a 3-party energy management problem in smart grid. With the development of modern residential communities, shared facilities provide essential public services to the RUs, e.g., maintenance of lifts in community apartments. Hence, it is necessary to study the energy demand management of shared facilities for expediting effective community work. In particular, for the considered setting, as will be seen shortly, energy trading of RUs with the grid and the SFC constitutes an important energy management problem for both the SFC and RUs. On the one hand, each RU is interested in selling its energy either to the SFC or to the grid at a higher price to increase revenue. On the other hand, the SFC wants to minimize its cost of energy purchased by making a price offer to RUs to encourage them to sell their energy to the SFC instead of the grid. This enables the SFC to be less dependent on expensive electricity from the grid.

As an energy management tool, the framework of a noncooperative Stackelberg game (NSG)~\cite{Maharjan-JTSG:2013} is considered. In fact, NSGs have been used extensively in designing different energy management solutions. For example, maximizing revenues of multiple utility companies and customers~\cite{Maharjan-JTSG:2013, Chai-TSG:2014}, minimizing customers' bills to retailers while maximizing retailers' profits~\cite{Meng-JSpringer:2013}, prioritizing consumers' interests in designing energy management solutions~\cite{Tushar-TSG:2013}, and managing energy between multiple micro-grids in the smart grid~\cite{Asimakopoulou-JTSG:2013}, among many others. However, the choice of players and their strategies significantly differ from one game to another based on the system model, the objective of energy management design and the use of algorithms. To that end, an NSG  is proposed for the considered scenario to capture the interaction between the SFC and RUs and it is shown that the maximum benefits to the SFC and RUs are achieved at the SE of the game. The properties of the game are studied, and it is proven that there exists a unique SE. Finally, a novel algorithm, which is guaranteed to reach the SE, and can be implemented in a distributed fashion by the SFC and the RUs is introduced. The effectiveness of the proposed scheme is confirmed by numerical simulations. 
\section{System Model}\label{system-model}
\begin{figure}[t!]
\centering
\includegraphics[width=\columnwidth]{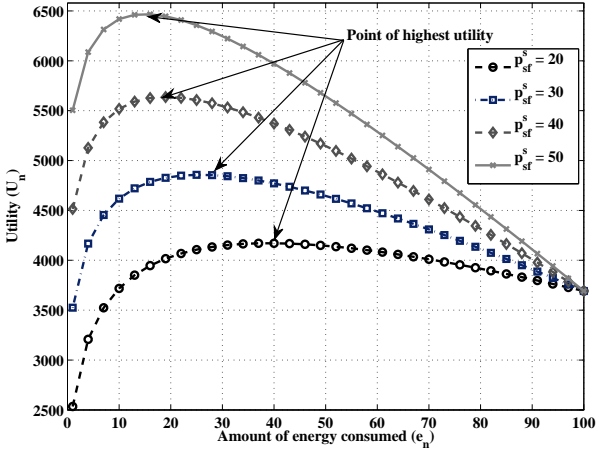}
\caption{Shift of maximum utility point as the price per unit of energy set by the SFC to pay to each RU increases. A higher price results in less consumption by the RU and vice versa.} \label{fig:utilityVsPrice}
\end{figure}
Consider a smart grid network consisting of the main grid and a smart community with $N$ RUs and an SFC, which are connected to one another via communication and power lines. Each RU, which is equipped with DERs such as solar panels or wind turbines, can be a single residential unit or group of units connected via an  aggregator that acts as a single entity. All RUs are considered to belong to the set $\mathcal{N}$. Here, on the one hand, the SFC does not have any electricity generation capacity.  Hence, at any time of the day, it needs to rely on the grid and RUs for required energy $E_\text{sf}^\text{req}$ to run equipment and machines in the shared facility such as lifts, water pumps, parking gates and lights that are shared and used by the residences on daily basis. On the other hand, each RU $n\in\mathcal{N}$ is considered to have no storage capability, and therefore, wants to consume or sell its generated energy $E_n^\text{gen}$ either to the main grid or to the SFC to raise revenue. It is assumed that each RU can manage its consumption $e_n$, and thus sell the rest of the generated energy $E_n^\text{gen} - e_n$ to the SFC or to the grid. Clearly, if $E_n^\text{gen}\leq E_n^\text{min}$, where $E_n^\text{min}$ is the base load for RU $n$, the RU cannot take part in the energy management. Otherwise, which is the considered case, the RU sells $E_n^\text{gen} - e_n$ after controlling its consumption amount $e_n$.

In general, the buying price $p_g^b$ of a grid is noticeably lower than its selling price $p_g^s$~\cite{McKenna-JIET:2013}. To this end, it is assumed that the price $p_\text{sf}^s$ per unit of energy that the SFC pays to each RU is set between the buying and selling price of the grid. Therefore, each RU can sell its energy at a higher price $p_\text{sf}^s>p_g^b$ and the SFC can buy at a lower price $p_\text{sf}^s<p_g^s$ by trading energy among themselves rather than trading with the grid. Under this condition, it is reasonable to assume that the RU $n$ would be more inclined to sell $E_n^\text{gen} - e_n$ to the SFC instead of to the grid. To that end, the amount of utility that an RU achieves from its energy consumption $e_n$ and trading the rest with the SFC can be modeled as 
\begin{eqnarray}
U_n = k_n\ln(1+e_n) + p_\text{sf}^s(E_n^\text{gen} - e_n), k_n>0.
\label{eqn:1}
\end{eqnarray}
In \eqref{eqn:1}, $k_n\ln(1+e_n)$ is the utility that the RU $n$ achieves from consuming $e_n$, and $k_n$ is a preference parameter~\cite{Samadi-C-Smartgridcomm:2010}. $p_\text{sf}^s(E_n^\text{gen} - e_n)$ is the revenue that the RU receives from selling the rest of its energy to the SFC. Please note that the natural logarithm $\ln(\cdot)$ has been used extensively for utility functions~\cite{Pavlidou-JCN:2008}, and has particularly been shown to be suitable for modeling the utility for power consumers~\cite{Maharjan-JTSG:2013}. From \eqref{eqn:1}, the RU $n$ would be interested in selling more energy to the SFC, e.g., by scheduling its use of devices at a later time, if the values of $k_n$ and $p_\text{sf}^s$ are high and vice-versa. The effect of $p_\text{sf}^s$ on the achieved utility by an RU is illustrated in Fig.~\ref{fig:utilityVsPrice}. The figure clearly shows that at a higher $p_\text{sf}^s$ maximum utility is achieved by an RU when it consumes less, i.e., it sells more to the SFC.

On the other hand, the SFC buys all its required energy $E_\text{sf}^\text{req}$ from RUs and the grid. Due to the choice of price $p_\text{sf}^s$, i.e., $p_\text{sf}^s<p_g^s$, the SFC is more interested in buying its energy from RUs and then procuring the rest, if there is any, from the grid at a price $p_g^s$. To this end, a cost function for the SFC is defined as
\begin{eqnarray}
J_\text{sf} = p_\text{sf}^s\sum_n e_\text{n,sf}^s + (E_\text{sf}^\text{req} - \sum_n e_\text{n,sf}^s)p_g^s\label{eqn:2}
\end{eqnarray}
to capture its total cost of buying energy from RUs and the grid. In \eqref{eqn:2}, $e_\text{n,sf}^s = E_n^\text{gen} - e_n$ is the amount of energy that the SFC buys from RU $n$. Now if $p_\text{sf}^s$ is too low it might cause an RU to refrain from selling its energy to the SFC. As a result, the SFC would need to buy all its $E_\text{sf}^\text{req}$ from the grid at a higher rate. On the contrary, if $p_\text{sf}^s$ is very high, it will increase the cost to the SFC significantly. Hence, $p_\text{sf}^s$ should be within a legitimate range to encourage the RUs to sell their energy to the SFC, while at the same time, keeping the cost to the SFC at a minimum. 

Now, to decide on the energy trading parameters $e_{n,\text{sf}}^s$ and $p_\text{sf}^s$, on the one hand, the SFC interacts with each RU $n\in\mathcal{N}$ to minimize \eqref{eqn:2} by choosing a suitable price to pay to each $n$. On the other hand, each RU decides on the amount of energy $e_n$ that it wants to consume and thus maximize \eqref{eqn:1}. To capture this interaction, an NSG between the SFC and RUs is proposed in the next section.

\section{Noncooperative Stackelberg Game and Its Properties}\label{sec:game-formulation}
First, the objective of each RU is to decide on the amount of energy $e_n$ that it wants to consume, and thus to determine $e_\text{n,sf}^s$ based on the offered price $p_\text{sf}^s$ to sell to the SFC such that \eqref{eqn:1} possesses the maximum value. Mathematically,
\begin{eqnarray}
\max_{e_n} \left[k_n\ln(1+e_n) + p_\text{sf}^s(E_n^\text{gen} - e_n)\right].\label{eqn:3}
\end{eqnarray} 
Conversely, having the offered energy from all RUs, i.e., $e_\text{n,sf}^s~\forall n$, the SFC determines the price $p_\text{sf}^s$ so as to minimize the cost captured via \eqref{eqn:2}. Therefore, the objective of the SFC is 
\begin{eqnarray}
\min_{p_\text{sf}^s}\left[ p_\text{sf}^s\sum_n e_\text{n,sf}^s + (E_\text{sf}^\text{req} - \sum_n e_\text{n,sf}^s)p_g^s\right].\label{eqn:4}
\end{eqnarray}
Here, \eqref{eqn:3} and \eqref{eqn:4} are concave and convex functions respectively, and are coupled via common parameters $e_n$ and $p_\text{sf}^s$. Therefore, it would be possible to solve the problem in an optimal centralized fashion if private information such as $k_n$ and $E_n^\text{gen}$ were available to the central controller. However, to protect the privacy of each RU as well as to reduce the demand on communications bandwidth, it is useful to develop a distributed mechanism. With these considerations in mind, we study the problem using an NSG.
\subsection{Noncooperative Stackelberg Game}\label{sec:nsg}
 A Stackelberg game, also known as a leader-follower game, studies the multi-level decision making processes of a number of independent players, i.e., followers, in response to the decision made by the leader (or, leaders) of the game~\cite{Maharjan-JTSG:2013}. In the proposed NSG, the SFC and each RU are modeled as the leader and a follower respectively. Formally, the NSG can be defined by its strategic form as
 \begin{eqnarray}
 \Gamma = \{(\mathcal{N}\cup\{\text{SFC}\}), \{\mathbf{E}_{n\in\mathcal{N}}\}, \{U_n\}_{n\in\mathcal{N}}, p_\text{sf}^s, J_\text{sf}\},\label{eqn:5}
 \end{eqnarray}
which has following components:
 \begin{enumerate}[i)]
 \item The set $\mathcal{N}$ of all followers in the game.
 \item The set $\{\text{SFC}\}$ of leaders in the game that has only one element in our case, i.e., a single leader.
 \item The strategy set $\mathbf{E}_n$ of each RU $n\in\mathcal{N}$ to choose an amount of energy $e_n\in\mathbf{E}_n$ to be consumed during the game.
 \item The utility function $U_n$ of each RU $n$ to capture the benefit from consuming $e_n$, and the utility from selling $e_\text{n,sf}^s = E_n^\text{gen} - e_n$ to the SFC.
 \item The price $p_\text{sf}^s$ set by the SFC to buy its energy from RUs.
 \item The cost function $J_\text{sf}$ of the SFC that quantifies the total cost of energy purchase from RUs and the grid.
 \end{enumerate}
 Through $\Gamma$, all RUs that want to trade their energy and the SFC interact with each other and decide on the decision vector $\mathbf{e} = [e_1, e_2, \hdots, e_n, \hdots, e_N]$ and $p_\text{sf}^s$ by choosing their appropriate strategies. In this regard, one suitable solution of the proposed $\Gamma$ is the SE, which is obtained as soon as the leader decides on its optimal price based on the followers' best responses of their offered energy. 
 \begin{definition}
Consider the NSG $\Gamma$ as defined by \eqref{eqn:5} where $U_n$ and $J_\text{sf}$ are determined by \eqref{eqn:1} and \eqref{eqn:2} respectively. A set of strategies $(\mathbf{e}^*, p_\text{sf}^{s^*})$ comprises the SE of the proposed $\Gamma$ if it satisfies the following set of inequalities:
\begin{eqnarray}
U_n(\mathbf{e}^*, p_\text{sf}^{s^*})\geq U_n(e_n, {\mathbf{e}_{-n}^*}, p_\text{sf}^{s^*}), \forall n\in\mathcal{N}, e_n\in\mathcal{N},
\label{eqn:6}
\end{eqnarray}
and
\begin{eqnarray}
J_\text{sf}(\mathbf{e}^*, p_\text{sf}^{s^*})\leq J_\text{sf}(\mathbf{e}^*, p_\text{sf}^{s}),\label{eqn:7}
\end{eqnarray}
where $\mathbf{e}_{-n}$ is the strategy set of all RUs in $\mathcal{N}/\{n\}$.
\label{def:1}
 \end{definition}
Therefore, according to \eqref{eqn:6} and \eqref{eqn:7}, neither the SFC nor any RU in the set $(\mathcal{N}\cup\{\text{SFC}\})$ can benefit, in terms of its total cost and achieved utility respectively, by unilaterally changing its strategy once the NSG $\Gamma$ reaches an SE. 
\subsection{Existence and Uniqueness of SE}
The existence of a pure strategy solution is not always guarateed in noncooperative games~\cite{Maharjan-JTSG:2013}. Hence, there is a need to investigate whether there exists any SE for the proposed NSG. The following theorem settles this issue.
\begin{theorem}
There exists a unique pure strategy SE in the proposed NSG $\Gamma$ between the SFC and RUs in the set $(\mathcal{N}\cup\{\text{SFC}\})$.\label{thm:1}
\end{theorem}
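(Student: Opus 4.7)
The plan is to exploit the hierarchical leader--follower structure of $\Gamma$ via backward induction: first pin down each RU's unique best response to an arbitrary price, then substitute these responses into $J_\text{sf}$ so that the SFC's problem collapses to a scalar optimization, and finally show that scalar problem has a unique minimizer. Concavity and convexity arguments drive everything, so the proof is essentially a first-/second-order derivative computation carried out twice.

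For the follower stage, I fix an arbitrary $p_\text{sf}^s>0$ and differentiate $U_n$ in \eqref{eqn:1} twice with respect to $e_n$: the second derivative $-k_n/(1+e_n)^2$ is strictly negative for every admissible $e_n$, so $U_n$ is strictly concave. The first-order condition $k_n/(1+e_n)=p_\text{sf}^s$ then yields the unique best response $e_n^\ast(p_\text{sf}^s)=k_n/p_\text{sf}^s-1$ for each $n\in\mathcal{N}$. This simultaneously characterizes the unique Nash equilibrium among the followers (the problems are in fact decoupled) and delivers a closed form for the amount offered to the SFC, $e_{n,\text{sf}}^{s\,\ast}(p_\text{sf}^s)=E_n^\text{gen}+1-k_n/p_\text{sf}^s$.

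For the leader stage, I substitute these best responses into \eqref{eqn:2}. After grouping terms, $J_\text{sf}$ reduces to a single-variable function of the form $\alpha\,p_\text{sf}^s+\beta/p_\text{sf}^s+\gamma$, with $\alpha=\sum_n(E_n^\text{gen}+1)>0$, $\beta=p_g^s\sum_n k_n>0$, and $\gamma$ a constant independent of $p_\text{sf}^s$. Its second derivative $2\beta/(p_\text{sf}^s)^3$ is strictly positive on the positive half-line, so the reduced cost is strictly convex. The first-order condition therefore has a unique root $p_\text{sf}^{s\,\ast}=\sqrt{\beta/\alpha}$, which is the unique minimizer. Pairing $p_\text{sf}^{s\,\ast}$ with $\{e_n^\ast(p_\text{sf}^{s\,\ast})\}_{n\in\mathcal{N}}$ yields a strategy profile that satisfies both \eqref{eqn:6} and \eqref{eqn:7} by construction, and uniqueness at each stage forces uniqueness of this SE.

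The step that will take the most care is feasibility rather than optimality: I must verify that $p_\text{sf}^{s\,\ast}$ lies in the admissible band $(p_g^b,p_g^s)$ imposed in Section~\ref{system-model}, and that the induced $e_n^\ast$ and $e_{n,\text{sf}}^{s\,\ast}$ remain nonnegative, i.e.\ $E_n^\text{gen}\geq k_n/p_\text{sf}^{s\,\ast}-1\geq 0$. This amounts to assuming the strategy set $\mathbf{E}_n$ is large enough to contain the interior optimizer (or, equivalently, mild assumptions on $k_n$, $E_n^\text{gen}$, and the grid prices); if this fails, the best response is given by a boundary projection, which still preserves uniqueness but requires a short case analysis to handle cleanly.
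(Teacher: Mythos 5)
Your proposal is correct and follows essentially the same backward-induction argument as the paper: strict concavity of $U_n$ in $e_n$ gives each RU a unique best response to any price, and strict convexity of the induced cost $J_\text{sf}$ in $p_\text{sf}^s$ gives the SFC a unique optimal price, hence a unique SE. You are in fact more explicit than the paper, which states the second derivative $2\sum_n k_n/(p_\text{sf}^s)^3$ without showing the substitution of $e_n^\ast(p_\text{sf}^s)=k_n/p_\text{sf}^s-1$ that makes $J_\text{sf}$ nonlinear in the price in the first place (and appears to drop the factor $p_g^s$ you correctly retain in $\beta$), and which never addresses the feasibility and boundary-projection issues you rightly flag as the delicate part.
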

\begin{proof}
First, note that $U_n$ in \eqref{eqn:1} is a strictly concave function of $e_n,~\forall n\in\mathcal{N}$, i.e., $\frac{\delta^2 U_n}{\delta e_n^2}<0$. Therefore, for any price $p_\text{sf}^s>0$, each RU $n$ will have a unique $e_n$, chosen from a bounded strategy set $[E_n^\text{min}, E_n^\text{gen}]$\footnote{An RU must consume at least its base load, and cannot consume more than its generation, at any time.}, that maximizes $U_n$. It is also noted that $\Gamma$ reaches SE when all players including the SFC and each RU $n\in\mathcal{N}$ have their best cost and utilities respectively with respect to the strategies chosen by all players in the game. Thereby, it is indisputable that the proposed game $\Gamma$ would find an SE as soon as the SFC is able to find an optimal price $p_\text{sf}^{s^*}$ while all RUs play their unique strategy vector $\mathbf{e}^*$.

Now the second derivative of \eqref{eqn:2} with respect to $p_\text{sf}^s$ is
\begin{eqnarray}
\frac{\delta^2 J_\text{sf}}{\delta p_\text{sf}^{s^2}} = \frac{2\sum_n k_n}{(p_\text{sf}^s)^3},
\end{eqnarray} 
which is greater than $0$. Therefore, $J_\text{sf}$ is strictly convex with respect to $p_\text{sf}^s$. Consequently, the SFC is able to find a unique price $p_\text{sf}^{s^*}$ in response to the strategy vector $\mathbf{e}^*$. Thus, there exists a unique SE in the proposed NSG, and Theorem~\ref{thm:1} is proved.
\end{proof}
\subsection{Distributed Algorithm}
\begin{algorithm}[t]
\caption{Algorithm to reach the SE}
\label{alg:1}
\begin{algorithmic}[1]
\small
\STATE Initialization: $p_\text{sf}^{s^*}=0$ $J_\text{sf}^*=p_g^s*E_\text{sf}^\text{req}$
\FOR {Buying pricing $p_\text{sf}^s$  from $p_g^b$ to $p_g^s$ }
   \FOR {Each RU $n \in \mathcal{N}$}
        \STATE RU $n$ adjusts its energy consumption $e_n$ according to 
        \begin{equation}\label{eqn:alg-1}
           e_n^* = {\rm{arg}}{\kern 1pt} {\kern 1pt} {\kern 1pt} \mathop {\max }\limits_{0 \le {e_n} \le E_n^\text{gen}} {\kern 1pt} {\kern 1pt} {\kern 1pt} [{k_n}\ln (1 + {e_n}) + p_\text{sf}^s(E_n^\text{gen} - {e_n})].
        \end{equation} \\
   \ENDFOR
    \STATE The SFC computes the cost according to 
        \begin{equation}\label{eqn:alg-2}
    {J_\text{sf}} = p_\text{sf}^s\sum\limits_{n \in \mathcal{N}} {(E_n^\text{gen} - {e_n})}  + p_g^s\left(E_\text{sf}^\text{req} -
     \sum\limits_{n \in \mathcal{N}} {(E_n^\text{gen} - {e_n})} \right).
        \end{equation} \\
     \IF {$J_\text{sf} \le J_\text{sf}^*$}
      \STATE The SFC keeps records of the optimal price and minimal cost
         \begin{equation}\label{eqn:alg-3}
           p_\text{sf}^{s^*}=p_\text{sf}^{s}, J_\text{sf}^*=J_\text{sf}
         \end{equation}
     \ENDIF  
\ENDFOR\\
\textbf{The SE $(\mathbf{e}^*, p_\text{sf}^{s^*})$ is achieved.}
\end{algorithmic}
\end{algorithm}
In this section, an iterative algorithm that the SFC and RUs can implement in a distributed fashion is proposed to reach the SE of the game. In order to attain the unique SE, the SFC needs to communicate with each RU. At each iteration, on the one hand, the RU $n$ chooses its best energy consumption amount $e_n$ in response to the price $p_\text{sf}^s$ set by the SFC, calculates $e_{n,\text{sf}}^s= E_n^\text{gen} - e_n$ and sends it to the SFC. On the other hand, having the information on the choice of energy $e_{n,\text{sf}}^s~\forall n$, the SFC derives its price $p_\text{sf}^s$ to minimize its cost in \eqref{eqn:2} and resends it to each RU. The interaction between the SFC and all RUs continues iteratively until \eqref{eqn:6} and \eqref{eqn:7} are satisfied. As soon as these conditions are met, the proposed NSG reaches the SE. Details are given in Algorithm~\ref{alg:1}.
\begin{theorem}
The proposed Algorithm~\ref{alg:1} is always guaranteed to reach the SE of the game. 
\end{theorem}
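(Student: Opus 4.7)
My plan is to prove this by leveraging Theorem~\ref{thm:1}, which already guarantees the existence and uniqueness of an SE $(\mathbf{e}^*, p_\text{sf}^{s^*})$. It therefore suffices to argue that Algorithm~\ref{alg:1} eventually identifies exactly this pair, namely that the inner loop computes the RUs' best-response profile to whatever price the leader posts, and that the outer loop correctly singles out the leader's unique optimal price.

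First, I would examine the inner loop over RUs. For any fixed posted price $p_\text{sf}^s$, the update \eqref{eqn:alg-1} solves the same strictly concave maximization appearing in \eqref{eqn:3}. By the strict concavity of $U_n$ in $e_n$ (established in the proof of Theorem~\ref{thm:1}), each RU produces a unique best response $e_n^*(p_\text{sf}^s)$ on the admissible interval $[E_n^\text{min}, E_n^\text{gen}]$. Hence, after the inner loop terminates, the algorithm holds the entire follower best-response profile $\mathbf{e}^*(p_\text{sf}^s)$ corresponding to that price, which is precisely what the Stackelberg framework requires the leader to anticipate.

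Second, I would analyze the outer loop as a line search of the leader's objective $J_\text{sf}$ over $p_\text{sf}^s \in [p_g^b, p_g^s]$. Two things are needed: (i) the unique SE price $p_\text{sf}^{s^*}$ lies in this interval, and (ii) the update rule \eqref{eqn:alg-3} correctly records it. For (i), any price below $p_g^b$ removes all incentive for RUs to sell to the SFC instead of to the grid, and any price above $p_g^s$ is dominated by purchasing entirely from the grid; together with the strict convexity of $J_\text{sf}$ in $p_\text{sf}^s$ (established in the proof of Theorem~\ref{thm:1}), this places the unique minimizer inside $[p_g^b, p_g^s]$. For (ii), the sweep evaluates $J_\text{sf}$ at every candidate price in this interval using the correct follower responses computed in the inner loop, and the conditional bookkeeping \eqref{eqn:alg-3} retains the price with the smallest realized cost; by uniqueness of the minimizer, the recorded pair must converge to $(\mathbf{e}^*, p_\text{sf}^{s^*})$. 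Upon termination, this pair satisfies \eqref{eqn:6} by construction of each $e_n^*$ as a follower-utility maximizer, and \eqref{eqn:7} by construction as the leader-cost minimizer, so by Definition~\ref{def:1} it is the SE.

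The main obstacle I expect is handling the outer loop rigorously, since the pseudocode sweeps $p_\text{sf}^s$ over a continuum without specifying a step size. One must either reinterpret the sweep as an exact one-dimensional convex minimization of $J_\text{sf}(p_\text{sf}^s)$, or argue that convergence holds up to the chosen discretization resolution by invoking continuity of the best-response map $p_\text{sf}^s \mapsto \mathbf{e}^*(p_\text{sf}^s)$ and of $J_\text{sf}$ in $p_\text{sf}^s$. This is the only delicate technical point; once resolved, uniqueness of the SE from Theorem~\ref{thm:1} forces the algorithm's output to coincide with the true equilibrium.
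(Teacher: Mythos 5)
Your proposal is correct and rests on the same two pillars as the paper's own proof---strict concavity of $U_n$ on the bounded set $[E_n^\text{min}, E_n^\text{gen}]$ giving a unique follower best response, and strict convexity of $J_\text{sf}$ in $p_\text{sf}^s$ giving a unique leader price---combined with the uniqueness of the SE from Theorem~\ref{thm:1}. Where you genuinely diverge is in how you model the algorithm. The paper's proof reads Algorithm~\ref{alg:1} as an iterative best-response dynamic converging to a fixed point (``each RU $n$ also reaches a fixed point at which its utility is maximized for the given price''), which does not quite match the pseudocode: as written, the algorithm is an exhaustive sweep of candidate prices over $[p_g^b, p_g^s]$, with the followers best-responding to each candidate and the leader recording the running minimum via \eqref{eqn:alg-3}. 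Your decomposition---inner loop as exact follower best response, outer loop as a one-dimensional minimization of the convex reduced cost $J_\text{sf}(p_\text{sf}^s)$ over a bounded interval---is the more faithful reading, and it buys you two things the paper omits: an explicit argument that the unique minimizer actually lies in the searched interval $[p_g^b, p_g^s]$ (the paper never justifies restricting the search to this range), and an honest acknowledgment that the continuum sweep must be interpreted either as exact convex minimization or as a discretization whose accuracy is controlled by continuity of the best-response map and of $J_\text{sf}$. That discretization gap is real and is present in the paper's argument too, which simply asserts convergence; your proposal at least names it and sketches the standard repair. In short, your route is slightly more rigorous than the published one, and nothing in it would fail.
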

\begin{proof}
According to the proposed algorithm, the conflict between RUs' choices of strategies stem from their impact on the choice of $p_\text{sf}^s$ by the SFC. Due to the strict convexity of $J_\text{sf}$, the choice of $p_\text{sf}^{s^*}>0$ lowers the cost of the SFC to the minimum. Now, as the algorithm is designed, in response to the $p_\text{sf}^{s^*}$, each RU $n$ chooses its strategy $e_n$ from the bounded range $\left[E_n^\text{min}, E_n^\text{gen}\right]$ to maximize its concave utility function $U_n$. Hence, due to a bounded strategy set and the continuity of the utility function $U_n$ with respect to $e_n$, each RU $n$ also reaches a fixed point at which its utility is maximized for the given price $p_\text{sf}^{s^*}$~\cite{Maharjan-JTSG:2013}. As a consequence, the proposed algorithm is always guaranteed to converge to the unique SE of the game.
\end{proof}
\section{Numerical Experiments}\label{numerical-simulation}
\begin{figure}[t]
\centering
\includegraphics[width=\columnwidth]{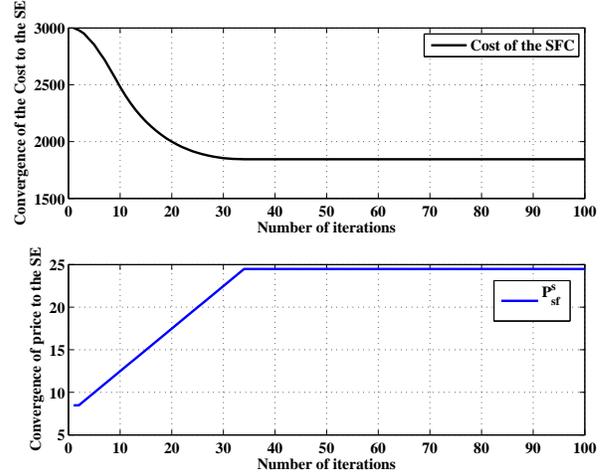}
\caption{Utility achieved by each RU at the SE.} \label{fig:convergence}
\end{figure}
The proposed energy management scheme is simulated by considering a number of RUs that are interested in selling their energy to the SFC. Typical energy generation of each RU from its DERs is assumed to be $10$ kWh~\cite{NREL_wind_generation:2009} and the required energy by the SFC is presumed to be $50$ kWh during the considered time. The preference parameter $k_n$ is chosen sufficiently large, e.g., $k_n$ is chosen from range $[90, 150]$ for this case study, such that $e_n$ and $p_\text{sf}^s$ in \eqref{eqn:1} are always positive. The grid's per unit selling price is assumed to be $60$ cents/kWh~\cite{Jin-J-TVT:2013} whereby the SFC sets its initial price equal to the grid's buying price of $8.45$ cents/kWh~\cite{Tushar-TSG:2013} to pay to each RU. Nonetheless, it is very important to highlight that all parameter values are particular to this study and may vary according to the need of the SFC, power generation of the grid and DERs, and the energy policy of a country.

In Fig.~\ref{fig:convergence}, the SFC's total cost is shown to converge to the SE by following Algorithm~\ref{alg:1} for a network with five RUs. It can be seen that although the SFC wants to minimize its total cost, it cannot do so with its initial choice of price for payment to the RUs. In fact, through interaction with each RU of the network the SFC eventually increases its price in each iteration to encourage the RUs to sell more, and consequently the cost continuously reduces. As can be seen from Fig.~\ref{fig:convergence}, the SFC's choice of equilibrium price and consequently also the minimum total cost reach their SE after the $34^\text{th}$ iteration.
\begin{figure}[t]
\centering
\includegraphics[width=\columnwidth]{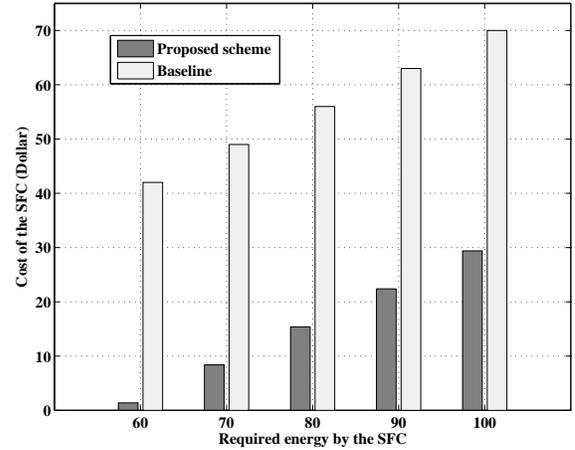}
\caption{Comparison of the cost incurred by the SFC between the proposed and baseline approaches for different amounts of required energy.} \label{fig:CostVsReq}
\end{figure}

Next, the effectiveness of the proposed scheme is demonstrated by comparing its performance with a standard baseline scheme that does not contain any DER facility, i.e., the SFC depends on the grid for all its energy. In this regard, considering $10$ RUs in the system, the total cost of energy trading that is incurred by the SFC is plotted in Fig.~\ref{fig:CostVsReq} for both the proposed and baseline approaches as the amount of energy required by the SFC increases. As shown in the figure, the cost to the SFC increases for both cases as the energy requirement increases from $60$ to $100$ kWh. In fact, it is a trivial result that a greater energy requirement leads the SFC to spend more money on buying energy, which consequently increases the cost. Nonetheless, the proposed scheme needs to spend significantly less to buy the same amount of energy due to the presence of the DERs of the RUs, and thus noticeably benefits from its energy trading in terms of total cost compared to the baseline scheme. As shown in Fig~\ref{fig:CostVsReq}, the SFC's cost is $74.9\%$, on average, lower than that of the baseline approach for the considered change in the SFC's energy requirement.
\begin{figure}[t]
\centering
\includegraphics[width=\columnwidth]{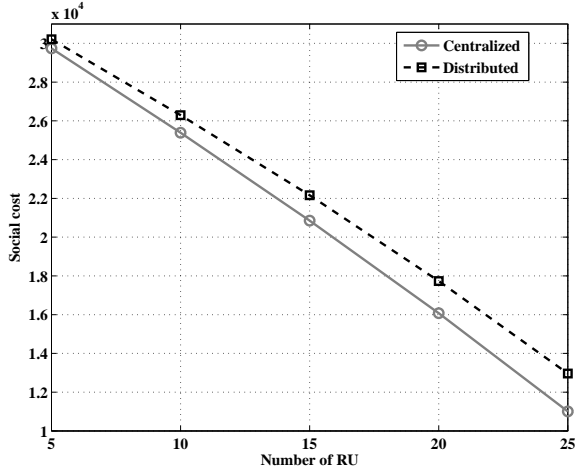}
\caption{Comparison of social cost obtained by the proposed distributed scheme with respect to the centralized scheme as the number of RUs varies in the network.} \label{fig:CentralVsPropose}
\end{figure}

Nevertheless, as mentioned in Section \ref{sec:game-formulation}, it is also possible to optimally manage energy between RUs and the SFC via a centralized control system to minimize the social cost\footnote{In contrast to social benefit, social cost is the difference between the total cost incurred by the SFC and total utility achieved by all RUs in the system.} if private information such as $k_n$ and $E_n^\text{gen}~\forall n$ is available to the controller. In this regard, the performance in terms of social cost for both the centralized and proposed distributed  schemes is observed in Fig.~\ref{fig:CentralVsPropose}. As can be seen from the figure, the social cost attained by adopting the distributed scheme is \emph{very close} to the optimal scheme at the SE of the game. However, the centralized scheme has access to the private information of each RU. Hence, the controller can optimally manage the energy, and as a result shows better performance in terms of reducing the SFC's cost compared to the proposed scheme. According to Fig.~\ref{fig:CentralVsPropose}, as the number of RUs changes in the network from $5$ to $25$, the average social cost for the proposed distributed scheme is only $7.07\%$ higher than that obtained via the centralized scheme. This is a promising result considering the distributed nature of the system.

\section{Conclusion}\label{conclusion}
In this paper, a user interactive energy management scheme has been proposed for a smart grid network that consists of a shared facility, the main grid and a large number of residential units (RUs). A noncooperative Stackelberg game (NSG) has been proposed that captures the interaction between the shared facility controller (SFC) and each RU and it has been shown to have a unique Stackelberg equilibrium (SE). It has been shown that the use of DERs for each RU is beneficial for both the SFC and RUs in terms of their incurred cost and achieved utilities respectively. Further, a distributed algorithm has been proposed, which is guaranteed to reach the SE and can be implemented by the players in a distributed fashion. Significant cost savings have been demonstrated for the SFC by comparing the proposed scheme with a standard baseline approach without any DERs.   

The proposed work can be extended in different directions. An interesting extension would be to examine the impact of discriminate pricing among the RUs on the outcome of the scheme. Another compelling augmentation would be to determine how to set the threshold on the grid's price. Further, quantifying the inconvenience that the SFC/RUs face during their interaction and quantifying the effect of the inclusion of storage devices could be other potential future extensions of the proposed work.  

\end{document}